\newtheorem{theorem}{Theorem}[section]
\newtheorem{lemma}[theorem]{Lemma} 
\newcommand{\soc}{\mathrm{soc}}
\newtheorem{teo}{Theorem}[section]
\newtheorem{remark}[teo]{Remark}
\newtheorem{cor}[teo]{Corollary}
\title[On reversible and reversible-complementary DNA codes]{ On reversible and reversible-comple-\\mentary DNA codes over $\mathbb{F}_{4}$}
\email{eliasjaviergarcia@gmail.com}
\date{\today}
\begin{document}

\maketitle

\begin{center}
  
  {\normalsize  E. J. Garc\'ia-Claro}\footnote{ORCID: \protect\href{https://orcid.org/0000-0003-4047-8554}{0000-0003-4047-8554}\par}\\
  {\small Departamento de Matem\'aticas\\
   Universidad Aut\'onoma Metropolitana-Iztapalapa\\
 Ciudad de M\'exico, M\'exico\par}
\end{center}

\begin{abstract}

A method to construct and count all the linear codes (of arbitrary length) in $\mathbb{F}_{4}$ that are invariant under reverse permutation and that contain the repetition code is presented. These codes are suitable for constructing DNA codes that satisfy the reverse and reverse-complement constraints.  By analyzing a module-theoretic structure of these codes, their generating matrices are characterized in terms of their isomorphism type, and explicit formulas for counting them are provided. The proposed construction method based on this characterization outperforms the one given by Abualrub  et al. \cite{abualrub} for cyclic codes (of odd length) over $\mathbb{F}_{4}$, and the counting method solves a problem that can not be solved using the one given by Fripertinger \cite{fripertinger} for invariant subspaces under a linear endomorphism of $\mathbb{F}_{q}^{n}$. Additionally, several upper bounds and an identity for the minimum Hamming distance of certain reversible codes are provided.

\end{abstract}

\textit{keywords}: DNA codes; reversible DNA codes, reverse codes.

\textit{Mathematics Subject Classification 2020}:  11T71, 94B65.

\section*{Introduction}

Deoxyribonucleic acid (DNA) is a polymer that contains instructions for the biological development of living organisms and viruses. This polymer is formed by two stands linked in a twisted double helix shape. Each of these  strands is a sequence of four possible nucleotides, adenine ($A$), guanine ($G$), cytosine ($C$) and thymine ($T$) with ends chemically polar called $3'$ and $5'$, e.g., $3'-ATAGGCTC-5'$. Thus, any DNA strand can be consider as a sequence of letters in the alphabet $\mathcal{A}:=\{A, C, T, G\}$ (i.e., an element of $\mathcal{A}^{n}$ for some $n\in \mathbb{Z}_{>0}$). The Watson-Crick complement (WCC) of a DNA strand $x$ is the sequence $c(x)$ or $x^{c}$ obtained by replacing each $A$ by $T$ and vice versa and each $C$ by $G$ and viceversa (while switching the $3'$ and $5'$ ends). For example, the WCC of $3'-ATAGGCTC-5'$ is $5'-TATCCGAG-3'$ (in the context of coding theory the ends are usually omitted and so the WCC of $ATAGGCTC$ is $(ATAGGCTC)^{c}:=TATCCGAG$). A code of length $n$ over $\mathcal{A}$ is a  subset of $\mathcal{A}^{n}$.\\
DNA Hybridization is the process in which two DNA strands bond together to form a double stranded DNA molecule. This process follows the base pair rule in which the $A$ and $T$ are bonded together and $C$ and $G$ are bonded together. Hybridization is necessary for DNA synthesis, which is crucial in areas such as DNA computing or DNA digital data storage, among others. DNA computing combines genetic data analysis with computer sciences to solve  computationally difficult problems. This field started in 1994 when Adleman solved a hard (NP-complete) in graph theory \cite{adleman}, namely ``the directed Hamiltonian path problem'' (on a graph with seven nodes). See \cite{adleman2, lipton} for more outstanding applications of DNA computing. On the other hand, DNA digital data storage uses the chemical structure of DNA nucleotides as a tool to store digital information. This presents a strong technology for large-scale digital record, which has great potential for multi-millennia-long archiving tasks due to DNA resistance to degradation over time, provided adequate conditions  \cite{masud, dna-storage}. For instance, it is known that a 1g of DNA contains about $4.2\times 10^{21}$  bits of information ($\approx100$ TB) \cite{abualrub}, and that a DNA sample can endure up to two million years \cite{kjaer}.\\
Ideally, it is expected that a single DNA strand bonds to its WCC, but this process is commonly affected by the formation of diverse undesired bonds such as the the one between a strand with the WCC of another strand (e.g., if the strand $5'-GGAC\textcolor{red}{T}GACG-3'$ links to $3'-CCTG\textcolor{red}{G}CTGC-5'$, causing a defect called DNA mismatch) or with itself (forming a structure called stem-loop). Thus a problem of interest for DNA synthesis is the one of constructing codes (i.e., collection of strands of a fixed length) that are suitable  to avoid undesired bonds that might occur during the hybridization process. In order to do so, several constraints have been studied such that the codes  satisfying them are less likely  to form these undesired pairing \cite{king, marathe, frutos}.\\
If $\mathfrak{A}\neq \emptyset$, the reverse mapping in $\mathfrak{A}^{n}$ is the mapping $r:\mathfrak{A}^{n}\rightarrow \mathfrak{A}^{n}$ given by 
$r(a_{1},a_{2},...,a_{n-1},a_{n}):=(a_{n},a_{n-1},...,a_{2},a_{1})$. For $x\in \mathfrak{A}^{n}$, the notation $r(x)$ or $x^{r}$ are adopted, indistinctly. Let $\mathcal{C}\subseteq \mathcal{A}^{n}$ be a code
\begin{enumerate}
\item If $d\leq d(x^{r}, y)$ $\forall x,y\in \mathcal{C}$ with $x\neq y$, it is said that $\mathcal{C}$ satisfies the reverse constraint (for $d$) or that $\mathcal{C}$ is a reverse code \cite{king}.
\item If $d\leq d(x^{r}, y^{c})$ $\forall x,y\in \mathcal{C}$ with $x\neq y$, it is said that $\mathcal{C}$ satisfies the reverse-complement constraint (for $d$) or that $\mathcal{C}$ is a reverse-complement code (for $d$) \cite{king}.
\end{enumerate}

Among the codes satisfying constraint $1$ are the ones such that $r(\mathcal{C})=\mathcal{C}$ (called reversible codes \cite{massey}); and, among these last, are those satisfying that $c(\mathcal{C})=\mathcal{C}$ (here called reversible-complementary codes), that satisfy constraints $1-2$. 
  Many authors have aboard the study of DNA codes over $\mathbb{F}_{4}$\cite{abualrub,gaborit, smith, aboluion}. In particular, Gaborit and King characterize reverse codes (not necessarily linear) as those with a permutation group of even order \cite[Lemma 1]{gaborit} and offer several lower bounds for the maximum size of a code in the family of reverse and reverse-complement codes. Later in \cite{abualrub}, Abualrub, Ghrayeb, and Zeng developed a theory for constructing linear and additive cyclic codes of odd length. In the linear case, if  $n=7,11,13$, their approach only produces the repetition code, and two codes distinct from the repetition code, for $n=9$. \\ 
This work will be focused on developing algebraic techniques for constructing and counting all the reversible and the reversible-complement\\-ary codes. The manuscript is organized as follows. In Section \ref{s1}, a characterization of the studied codes is given in terms of their generating matrices; upper bounds for the minimum Hamming distance of reversible codes, that can be better than the Singleton bound, are introduced; and an efficient method for constructing reversible and reversible-complementary codes is presented. Later, in Section \ref{s2}, some formulas to count all the reversible and reversible-complementary linear codes, depending on their isomorphism type,  are presented.

\section{Computation of  reversible and reversible-complementary codes over $\mathbb{F}_{4}$}\label{s1}

In this section, a method to compute all the submodules of $V$ in an efficient way is presented, i.e., in a way that no module is obtained more than once as a sum of indecomposable modules .\\
From now on, $F:=\mathbb{F}_{4}=\{0, 1, \alpha, \alpha^{2}\}$ (where $\alpha^{2}=\alpha + 1$) is the finite field of order $4$, and $V:=F^n$ with $n\geq 2$. $r: V\to V$ is the reverse permutation automorphism, i.e., the linear automorphism of $V$ given by $r(a_1,a_2,\ldots, a_{n-1},a_n):=(a_n, a_{n-1}, \ldots,a_2, a_1)$. \\
If $\mathcal{A}=\{A, C, T, G\}$ and $\varphi: \mathcal{A} \rightarrow \mathbb{F}_{4}$ is the mapping given by $\varphi(A)=0$,     $\varphi(C)=\alpha$, $\varphi(T)=1$, and $\varphi(G)=\alpha^{2}$, then $\varphi^{n}: \mathcal{A}^{n}\rightarrow V$ defined as $\varphi^{n}((a_{i})_{i=1}^{n})=(\varphi(a_{i}))_{i=1}^{n}$ is a bijection such that $\varphi^{n}((a_{i}^{c})_{i=1}^{n})= (\varphi(a_{i}))_{i=1}^{n} + (1,1,...,1)$. Thus if $\mathcal{C}\subseteq \mathcal{A}^{n}$, $r(\mathcal{C})=\mathcal{C}$ if and only if  $r(\varphi^{n}(\mathcal{C}))=\varphi^{n}(\mathcal{C})$, and $c(\mathcal{C})=\mathcal{C}$ if and only if $\varphi^{n}(\mathcal{C})+(1,1,...,1)=\varphi^{n}(\mathcal{C})$.  Hence  $\varphi^{n}$ gives a one-to-one correspondence between reversible (reversible-complementary) codes in $\mathcal{A}^{n}$ and reversible codes (reversible codes containing the repetition code) in $V$. The existence of this bijection motivates the construction and counting of reversible (reversible-complementary) codes in $V$.\\
Let $R=\frac{F[x]}{\langle x^{2}+1\rangle}$. Since $V$ has a left module structure defined by $ (ax+b) \cdot v = ar(v)+bv $  for any $ax+b\in R$ and $v \in V$, a linear code $\mathcal{C}\subseteq V$  is reversible  if and only if $\mathcal{C}$  is a submodule of $V$. From now on, every module will be a left module. If $M$ is a module, and $t$ is a non-negative integer, the external direct sum $M \oplus \cdots \oplus M$ ($t$ times) is denoted by $tM$, where $0M=0$. It is an easy exercise to verify that any submodule (reversible code) of $V$ is a direct sum of copies of $R$  and/or $F$ (since $R$ and $F$ are the only indecomposable modules, this coincides with Krull-Schmidt Theorem).\\
From now on, $T:=r+\mathrm{id}_V$, $K:=ker(T)$, $I:=im(T)$, and $\mathbf{1}$ will denote the repetition code. If $C\subseteq V$ is a linear code, $C^{\perp}:=\{v\in V: v\cdot c=0 \, \text{ for all } c\in C\}$  is the dual of $C$ under the standard dot product.

\begin{lemma}\label{lem2.2}
Let $\beta=\{ e_1, \ldots, e_n \}$ be the canonical ordered basis of $V$. The following statements hold:
\begin{enumerate}
\item $K$ is the set of fixed points under $r$, and $\gamma:=\{ e_{i}+r(e_{i})\, : \, i=1,\ldots, \lfloor n/2 \rfloor  \}$ is a basis for $I$. In particular, $\dim(I) = \lfloor n/2 \rfloor$ and $\dim (K)=\lceil n/2 \rceil$.
\item Any linear code contained in $I$ is self-orthogonal.
\item $K =  I$ if $n$ is even; and
$K=I\oplus \mathbf{1}$ if $n$ is odd. In particular, $K=I^{\perp}$. 
\item  $V \cong_R \lfloor n/2\rfloor R$ if $n$ is even; and $V \cong_R (\lfloor n/2\rfloor R) \oplus F$ if  $n$  is odd.
\end{enumerate}
\end{lemma}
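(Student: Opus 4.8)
The plan is to handle the four parts in order, using throughout that $\mathrm{char}\,F = 2$ (so $-1 = 1$) and $r^2 = \mathrm{id}_V$. For (1), I would evaluate $T$ on the canonical basis $\beta$: since $r(e_i) = e_{n+1-i}$, we get $T(e_i) = e_i + e_{n+1-i}$, and the involution $i \leftrightarrow n+1-i$ forces $T(e_i) = T(e_{n+1-i})$, together with $T(e_{(n+1)/2}) = 2e_{(n+1)/2} = 0$ when $n$ is odd. Thus $I$ is spanned by $\gamma$, whose members $e_i + e_{n+1-i}$ (for $i \le \lfloor n/2 \rfloor$) have pairwise disjoint supports and are therefore linearly independent; so $\gamma$ is a basis and $\dim I = \lfloor n/2 \rfloor$. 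For $K$, the condition $v \in \ker T$ reads $r(v) = -v = v$, i.e. $v$ is $r$-fixed, and rank--nullity gives $\dim K = \lceil n/2 \rceil$.

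For (2) it suffices to verify self-orthogonality on $\gamma$: each $\gamma_i := e_i + e_{n+1-i}$ is isotropic since $\gamma_i \cdot \gamma_i = 1 + 1 = 0$, and $\gamma_i \cdot \gamma_j = 0$ for $i \ne j$ by disjointness of supports; hence $I \subseteq I^{\perp}$, and any $C \subseteq I$ satisfies $C \subseteq I \subseteq I^{\perp} \subseteq C^{\perp}$. Part (3) rests on the identity $T^2 = (r + \mathrm{id}_V)^2 = r^2 + \mathrm{id}_V = 0$, which yields $I \subseteq K$. This inclusion is an equality when $n$ is even (equal dimensions); when $n$ is odd, the all-ones vector lies in $K$ but not in $I$ (its middle coordinate is nonzero, whereas every vector of $I$ has middle coordinate $0$), so $\mathbf{1} \cap I = 0$, and since $\mathbf{1} \subseteq K$ a dimension count gives $K = I \oplus \mathbf{1}$. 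The identity $K = I^{\perp}$ then follows because $r$ is an orthogonal involution ($r^{-1} = r$), whence $T$ is self-adjoint and $\ker T = (\mathrm{im}\,T)^{\perp}$; explicitly $w \cdot T(u) = T(w) \cdot u = 0$ for $w \in K$, and equality comes from $\dim K = n - \dim I = \dim I^{\perp}$.

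For (4) I would decompose $V$ explicitly along coordinate pairs. For $i = 1, \ldots, \lfloor n/2 \rfloor$ the plane $W_i := \langle e_i, e_{n+1-i}\rangle$ is $r$-invariant, hence an $R$-submodule, and sending $1 \mapsto e_i$, $x \mapsto e_{n+1-i}$ defines an $R$-isomorphism $R \cong_R W_i$, since $x$ acts as the swap and $x^2 = 1$ in $R = F[x]/\langle (x+1)^2\rangle$. When $n$ is odd the fixed line $\langle e_{(n+1)/2}\rangle$ is a submodule on which $x$ acts as $1$, i.e. a copy of $F$. As $V$ is the internal direct sum of these submodules, the claimed decompositions follow. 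One could instead combine the earlier Krull--Schmidt remark $V \cong_R aR \oplus bF$ with the invariants $2a + b = \dim V = n$ and $a + b = \dim K = \lceil n/2 \rceil$ to solve $a = \lfloor n/2 \rfloor$ and $b = \lceil n/2 \rceil - \lfloor n/2 \rfloor$.

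The only genuinely delicate point I anticipate is the module identification in (4) --- recognizing the swap action on a coordinate pair as multiplication by $x$ on $R$ --- which relies squarely on the characteristic-two collapse $x^2 + 1 = (x+1)^2$ with $x^2 = 1$ in $R$. Everything else reduces to dimension bookkeeping together with the two characteristic-two facts $T^2 = 0$ and the isotropy $\gamma_i \cdot \gamma_i = 0$.
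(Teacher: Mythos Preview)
Your proof is correct and follows essentially the same approach as the paper: compute $T$ on the canonical basis to obtain $\gamma$, invoke rank--nullity, check orthogonality on $\gamma$, use $T^2=0$ for $I\subseteq K$, and decompose $V$ along the coordinate pairs $\langle e_i, r(e_i)\rangle$. The only stylistic differences are that you derive $K=I^{\perp}$ from the self-adjointness of $T$ (the paper computes on the basis instead) and that in part (4) you exhibit the isomorphism $R\cong W_i$ explicitly (the paper argues by exclusion that $W_i\not\cong F\oplus F$); your additional Krull--Schmidt dimension count is a nice alternative not present in the paper.
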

\begin{proof}

\begin{enumerate}

\item For $v\in V$, $T(v)=r(v)+v=0$ if and only if $r(v)=v$. Since $\gamma$ is a linear independent subset of $I$ and $I=T(V)=T(\langle \beta \rangle_{F})=\langle T(\beta) \rangle_{F}=\langle \gamma \rangle_{F}$,  $\gamma$ is a basis for $I$. The rest follows from the rank-nullity theorem and the fact that $n=\lceil n/2 \rceil+ \lfloor n/2 \rfloor$.
\item Let $C\subseteq I$ be a linear code and $c\in C$. Then, by part $1$,  $c=\left(\sum_{i=1}^{\lfloor n/2 \rfloor}c_{i}(e_{i}+r(e_{i}))\right)\cdot (e_{j}+r(e_{j}))=c_{j}(e_{j}+r(e_{j}))\cdot (e_{j}+r(e_{j}))=2c_{j}=0$ for $j=1,...,\lfloor n/2 \rfloor$ and so any element in $C$ is orthogonal to any element in $I$. Therefore $C\subseteq C^{\perp}$.
\item Since $T^{2}=0$, $I\subseteq K$. Thus, if $n$ is even, $I=K$ (by part $1$). In particular,  if $n$ is even,  $K=I\subseteq I^{\perp}$ (by part $2$), then $K=I^{\perp}$ (by part $1$). On the other hand, if $n$ is odd, $\mathbf{1}\subset K-I$ thus $K=I \oplus \mathbf{1}$. In addition $\gamma \cup \{(1,1,...,1)\}$ is a basis for $K$ whose elements are orthogonal to the ones in $\gamma$, so that $K\subseteq I^{\perp}$ and so $K= I^{\perp}$ (by part $1$).
\item If $1\leq i \leq\lfloor n/2 \rfloor$, $M_i:=\langle e_i, r(e_i) \rangle_F$ is a submodule of $V$ non-isomorphic to $F\oplus F$ (because $e_{i}$ is not a fixed point of $r$) and so $M_i\cong_R R$. In addition, if $n$ is odd, $M_{\lceil n/2 \rceil} \cong_R F$. The rest follows form the fact that $V = \bigoplus_{i=1}^{\lceil n/2 \rceil} M_i$.
\end{enumerate}
 
\end{proof}

Let $M\neq 0$ be a module. The socle $soc(M)$ is the sum of all simple submodules of $M$. If $soc(M)=M$ it is said that $M$ is a semisimple module. The standard notation $M\leq V$ will be used to denote that $M$ is a submodule of $V$. If $M,L\leq V$, it will be said that $M$ is a \textbf{socle-extension} of  $L$ or that $L$ \textbf{extends} to $M$ if $\soc(M)=L$. Since $soc(V)=K$, any submodule $0\neq M$ of $V$ is a socle-extension of a subspace of $K$. Thus we will focus on computing and counting the submodules that are socle-extensions of subspaces of $K$. 

\begin{lemma}\label{lem2.3}
Let $0\neq M\leq V$ and $S\leq V$ be simple. Then the following holds:
\begin{enumerate}
\item $\soc (M)=K\cap M$. In particular, $M$ is semisimple if and only if $M\subseteq K$.

\item $M$ is simple if and only if $M=\langle s \rangle_F$ for some $s\in K - \{0\}$.

\item $M\cong_R R$ if and only if there exists $ m \in M-soc(M)$ such that $M=\langle m,r(m) \rangle_F$.

\item There exists a submodule $M\cong_R R$ of $V$ such that $S$ extends to $M$ if and only if $S \subseteq I$.


\end{enumerate} 
\end{lemma}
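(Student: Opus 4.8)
The plan is to exploit the local-ring structure of $R$. Since $F$ has characteristic $2$, one has $x^{2}+1=(x+1)^{2}$ in $F[x]$, so $R=F[x]/\langle (x+1)^{2}\rangle$ is a local ring with maximal ideal $\langle x+1\rangle$ satisfying $\langle x+1\rangle^{2}=0$; moreover the action of $x+1$ on any module is exactly the operator $T$, since $(x+1)\cdot v=r(v)+v=T(v)$. With this dictionary, both socles and copies of $R$ can be detected through $T$, and the four parts then follow in sequence, each reducing to the dichotomy $r(m)=m$ versus $r(m)\neq m$.

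For part $1$, I would first record that the simple submodules of $V$ are precisely the lines $\langle s\rangle_{F}$ with $s\in K-\{0\}$: a line is an $R$-submodule iff it is $r$-invariant, and $r(s)=\lambda s$ forces $\lambda^{2}=1$, hence $\lambda=1$ (characteristic $2$), i.e.\ $s\in K$. Consequently every simple submodule of $M$ is a line inside the $F$-subspace $K\cap M$, and conversely every line of $K\cap M$ is $r$-invariant and one-dimensional, hence a simple submodule of $M$; summing these lines gives $\soc(M)=K\cap M$. The ``in particular'' is then immediate, as $\soc(M)=M\iff K\cap M=M\iff M\subseteq K$. Part $2$ is the same computation specialized to $M$ itself: if $M$ is simple it is semisimple, so by part $1$ it lies in $K$, and a module over $R$ on which $x$ acts as the identity is just an $F$-space, so simplicity means $\dim_{F}M=1$; the converse is the observation that any line in $K$ is a simple submodule.

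Part $3$ translates ``$M\cong_{R}R$'' into a concrete generator condition and is the technical core. For the forward direction, an isomorphism $M\cong_{R}R$ exhibits $M=R\cdot m$ for a cyclic generator $m$, and since $R\cdot m=\{a\,r(m)+b\,m:a,b\in F\}$ one gets $M=\langle m,r(m)\rangle_{F}$; because $\dim_{F}M=\dim_{F}R=2$ the vectors $m,r(m)$ are independent, so $r(m)\neq m$, i.e.\ $m\notin K$, hence $m\in M-\soc(M)$ by part $1$. Conversely, given $m\in M-\soc(M)$ with $M=\langle m,r(m)\rangle_{F}$, the condition $m\notin K$ makes $r(m)\neq m$ and therefore $m,r(m)$ independent, so $\dim_{F}M=2$; the surjection $R\to M$, $\rho\mapsto\rho\cdot m$, is then a map between $2$-dimensional $F$-spaces that is onto, hence an isomorphism, giving $M\cong_{R}R$.

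For part $4$, the key auxiliary fact is that the socle of a cyclic module $R\cdot m$ with $m\notin K$ equals $\langle T(m)\rangle_{F}$: under $R\cdot m\cong R$ the socle corresponds to $\langle x+1\rangle\cdot m=F\,(x+1)\cdot m=\langle T(m)\rangle_{F}$. Granting this, if $M\cong_{R}R$ with $\soc(M)=S$, then writing $M=\langle m,r(m)\rangle_{F}$ as in part $3$ yields $S=\langle T(m)\rangle_{F}\subseteq I$. Conversely, if $S=\langle s\rangle_{F}\subseteq I$ then $s=T(m)$ for some $m$, necessarily with $m\notin K$; setting $M=\langle m,r(m)\rangle_{F}=R\cdot m$ gives $M\cong_{R}R$ by part $3$ and $\soc(M)=\langle T(m)\rangle_{F}=S$. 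The main obstacle I anticipate is essentially bookkeeping: keeping the identification $\soc(R\cdot m)=\langle T(m)\rangle_{F}$ honest, and making sure the genericity hypothesis $m\notin K$ is invoked at every point where two-dimensionality (equivalently, faithfulness of the $R$-action) is needed, since all four parts hinge on separating the fixed case $r(m)=m$ from the non-fixed case.
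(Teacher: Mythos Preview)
Your proof is correct and follows essentially the same approach as the paper's: both exploit that $R$ is local with maximal ideal $\langle x+1\rangle$ acting via $T$, both use the eigenvalue computation $\lambda^2=1\Rightarrow\lambda=1$ to pin down simples, and both prove part $4$ via the identity $\soc(R\cdot m)=\langle T(m)\rangle_F$. The only cosmetic differences are that in part $1$ the paper invokes the annihilator description $\soc(M)=\{m:J m=0\}$ directly rather than summing the simple lines, and in the converse of part $3$ it appeals to the dichotomy of indecomposables ($M\not\cong F\oplus F$ hence $M\cong R$) instead of your explicit surjection $R\to M$, $\rho\mapsto\rho\cdot m$.
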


\begin{proof}
\begin{enumerate}

\item Since $R$ is local with maximal ideal $J=R(x+1)$, $\soc(M)=\{ m \in M: \ J \cdot m=0\}=\{ m \in M: \ r(m)+m=0\}=K\cap M$. The rest is clear.

\item It follows from Lemma \ref{lem2.2}, part $1$.

\item Suppose $M\cong_R R$. Let $\varphi:R\rightarrow M$ be an isomorphism of modules, then  $\varphi(R)=  \varphi(\langle 1,x \rangle_F) = \langle \varphi(1), \varphi(x) \rangle_F= \langle \varphi(1), x\varphi(1) \rangle_F$ $=\langle \varphi(1),$ $ r(\varphi(1)) \rangle_F= \langle m, r(m) \rangle_F  $ where $m=\varphi(1) \in M-soc(M)$ (because $1\in R-soc(R)$). Conversely, suppose there exists $ m \in M-soc(M)$ such that $M=\langle m,r(m) \rangle_F$. Since the only eigenvalue of $r$ is $1$ and $m\in M-soc(M)$, then $r(m)\neq \lambda m$ for all $\lambda\in F$. Hence $M$ has dimension $2$ and $M\not\cong F\oplus F$. Therefore $M\cong_{R} R$.

\item Let $M\leq V$ with $M\cong_R R$ such that $soc(M)=S$. If $\varphi: R\rightarrow M$ is an isomorphism of modules, then $0\neq \varphi (x+1)=\varphi(x)+\varphi(1)=x\varphi(1)+\varphi(1)=r(\varphi(1))+\varphi(1)=T(\varphi(1))\in I\cap M$. Thus, since $I\subseteq K$ (by Lemma \ref{lem2.2}, part $3$), then $0\neq I\cap M\subseteq K\cap M=soc(M)=S$ (by part $1$), and so $S=I\cap M\subset I$. Converseley, if $S \subseteq I$, then $K\subsetneq T^{-1}(S)\neq \emptyset$. Thus, if $v\in T^{-1}(S)-K$, $T(v)=r(v)+v\in S-\{0\}$ and so $S\subseteq M:= \left\langle v, r(v) \right\rangle_{F}$ which is isomorphic to $R$ (by part $3$). Hence $S$ extends to $M$.
\end{enumerate}
\end{proof}



Since, $soc(M)\subseteq  soc(V)=K$, $\dim(soc(M))=t+s\leq \lceil n/2 \rceil$ (by  Lemma \ref{lem2.2}, part $1$). In addition, $t\leq \lfloor n/ 2 \rfloor$ (by Lemma \ref{lem2.2}, part $4$). Thus if $M\leq V$, then $M\cong_{R}tR \oplus sF$  for some $t\in \{0,..., \lfloor n/2 \rfloor\}$  and $s\in \{0,..., \lceil n/2 \rceil-t \}$.\\
Let $0\neq M$ be a module. If $X=\{M_{i}\}_{i=1}^{t}$ is a collection of submodules of $M$, such that $\sum_{i=1}^{t}M_{i}=\oplus_{i=1}^{t}M_{i}$, then it is said that $X$ is an independent set of submodules of $M$.

\begin{remark}\label{ind}
Let $\beta=\{ M_i \}_{i=1}^{k} $ be a collection of submodules of $V$ isomorphic to $R$ and $\beta'= \{soc(M_{i})\}_{i=1}^{k}$. Then $\beta$ is independent if and only if $\beta'$ is independent.
\end{remark}

 Let $t\in \{1,...,\lfloor n/2 \rfloor \}$ and $\{S_{i}\}_{i=1}^{t}$ be an independent set of simple submodules of $I$. Let $\overline{\phantom{R}}: V \rightarrow V/\oplus_{i=1}^{t}S_{i}$ be given by $u\mapsto \overline{u}:= u + \oplus_{i=1}^{t}S_{i}$ for all $u\in V$. Then the relation $\sim$ defined in $\prod_{i=1}^{t}(T^{-1}(S_{i})-K)$ as $(u_{i})_{i=1}^{t}\sim (w_{i})_{i=1}^{t}$ if and only if $\langle\{ \overline{u_{i}}\}_{i=1}^{t}\rangle_{F}=\langle \{ \overline{w_{i}} \}_{i=1}^{t}\rangle_{F}$ is an equivalence relation.


\begin{teo}\label{const-tRsF}
Let $M\leq V$, $t\in \{1,...,\lfloor n/2 \rfloor\}$ and $s\in \{0,..., \lceil n/2 \rceil-t \}$. Then $M\cong_{R} tR\oplus sF$ if and only if there exists an independent set $\{S_{i}\}_{i=1}^{t}$  of simple submodules of $I$ and $(u_{i})_{i=1}^{t}\in \prod_{i=1}^{t}(T^{-1}(S_{i})-K)$ such that $M=\left(\oplus_{i=1}^{t}\langle u_{i}, r(u_{i})\rangle_{F}\right)\oplus L$, where $L$ is a subspace of dimension $s$ of $K$ such that $L\cap\left(\oplus_{i=1}^{t}S_{i}\right)=0$. In addition, if $(u_{i})_{i=1}^{t}, (w_{i})_{i=1}^{t}\in \prod_{i=1}^{t}(T^{-1}(S_{i})-K)$, then  $\sum_{i=1}^{t}\langle u_{i}, r(u_{i})\rangle_{F}=\sum_{i=1}^{t}\langle  w_{i}, r(w_{i})$ $\rangle_{F}$ if and only if $(u_{i})_{i=1}^{t}\sim (w_{i})_{i=1}^{t}$.

\end{teo}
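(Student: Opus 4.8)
The plan is to prove both implications by realizing the abstract isomorphism type $tR\oplus sF$ as an explicit internal direct sum, and then to settle the counting statement by passing to the quotient $V/\bigl(\oplus_{i=1}^t S_i\bigr)$. Throughout I will use one recurring fact: if $N\cong_R R$ is generated as $N=\langle m, r(m)\rangle_F$ with $m\notin K$, then $T(m)=r(m)+m\in I\cap N$ is nonzero and one-dimensional, so $\soc(N)=K\cap N=I\cap N=\langle T(m)\rangle_F$ is a simple submodule contained in $I$. This is immediate from Lemma \ref{lem2.3}, parts $1$, $3$, and $4$ (the last also giving $\soc(N)=I\cap N$).

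For the backward implication, assume the stated decomposition exists. First I would verify that each $M_i:=\langle u_i, r(u_i)\rangle_F$ is $r$-invariant, hence an $R$-submodule, and that $M_i\cong_R R$ with $\soc(M_i)=S_i$, using $u_i\in T^{-1}(S_i)-K$ and the recurring fact. Since $\{S_i\}_{i=1}^t$ is independent, Remark \ref{ind} gives that $\{M_i\}_{i=1}^t$ is independent, so $\sum_i M_i=\oplus_i M_i\cong_R tR$; and since $L\subseteq K$, Lemma \ref{lem2.3}(1) makes $L$ semisimple, so $L\cong_R sF$. The point needing care is that $M=(\oplus_i M_i)\oplus L$ is a genuine module direct sum, i.e.\ $(\oplus_i M_i)\cap L=0$. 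Here I would use additivity of the socle, $K\cap(\oplus_i M_i)=\soc(\oplus_i M_i)=\oplus_i\soc(M_i)=\oplus_i S_i$ (equivalently, apply $T$ to an element of $(\oplus_i M_i)\cap K$ and invoke independence of the $M_i$). Because $L\subseteq K$, this yields $(\oplus_i M_i)\cap L=(\oplus_i S_i)\cap L=0$, whence $M\cong_R tR\oplus sF$.

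For the forward implication, transport an isomorphism $M\cong_R tR\oplus sF$ to an internal decomposition $M=(\oplus_{i=1}^t N_i)\oplus L$ with $N_i\cong_R R$ and $L\cong_R sF$, so $L\subseteq K$ and $\dim L=s$. For each $i$, Lemma \ref{lem2.3}(3) supplies $u_i\in N_i-\soc(N_i)$ with $N_i=\langle u_i, r(u_i)\rangle_F$; set $S_i:=\soc(N_i)=I\cap N_i$, a simple submodule of $I$. Then $u_i\notin K$ and $0\neq T(u_i)\in I\cap N_i=S_i$, so $u_i\in T^{-1}(S_i)-K$. Independence of $\{S_i\}$ follows from independence of $\{N_i\}$ via Remark \ref{ind}, and $L\cap(\oplus_i S_i)=0$ follows from $(\oplus_i N_i)\cap L=0$ together with $\oplus_i S_i=K\cap(\oplus_i N_i)$, exactly as above. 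This produces the required data. For the counting equivalence, put $P:=\oplus_{i=1}^t S_i$ with quotient map $\pi\colon V\to V/P$, and $M_u:=\sum_i\langle u_i, r(u_i)\rangle_F$, $M_w:=\sum_i\langle w_i, r(w_i)\rangle_F$ (both direct, with socle $P$). The crux is the identity $\pi(M_u)=\langle\{\overline{u_i}\}_{i=1}^t\rangle_F$: since $r(u_i)+u_i=T(u_i)\in S_i\subseteq P$, one has $\overline{r(u_i)}=\overline{u_i}$ in $V/P$, so the generators $r(u_i)$ collapse onto the $\overline{u_i}$. As $P\subseteq M_u$, we have $M_u=\pi^{-1}(\pi(M_u))$, and likewise for $M_w$; hence $M_u=M_w$ iff $\pi(M_u)=\pi(M_w)$, i.e.\ $\langle\{\overline{u_i}\}\rangle_F=\langle\{\overline{w_i}\}\rangle_F$, which is precisely $(u_i)_{i=1}^t\sim(w_i)_{i=1}^t$.

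I expect the main obstacle to be bookkeeping the two distinct directness conditions: the hypothesis imposes $L\cap(\oplus_i S_i)=0$ at the level of socles, whereas module-theoretic directness of $M$ needs $L\cap(\oplus_i M_i)=0$. Reconciling these through the socle computation $K\cap(\oplus_i M_i)=\oplus_i S_i$ is the step that makes the characterization work, and in the counting part it is the same inclusion $P\subseteq M_u$ that licenses the recovery $M_u=\pi^{-1}(\pi(M_u))$.
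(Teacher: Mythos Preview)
Your proposal is correct and follows essentially the same route as the paper's proof: both directions are handled by realizing the $tR$-part via Lemma \ref{lem2.3}(3) and Remark \ref{ind}, passing between directness of $\{M_i\}$ and of $\{S_i\}$, and the counting equivalence is obtained by collapsing $r(u_i)$ onto $\overline{u_i}$ in $V/(\oplus_i S_i)$. The only difference is one of explicitness: the paper condenses your socle computation $K\cap(\oplus_i M_i)=\oplus_i S_i$ into the single remark that ``a submodule of $V$ intersects another submodule trivially if and only if it intersects its socle trivially,'' whereas you spell this out and also make the recovery step $M_u=\pi^{-1}(\pi(M_u))$ explicit.
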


\begin{proof}
Let $N\leq V$. It will be shown that $N\cong_{R} tR$ if and only if there exists an independent set $\{S_{i}\}_{i=1}^{t}$  of simple submodules of $I$ and $(u_{i})_{i=1}^{t}\in \prod_{i=1}^{t}(T^{-1}(S_{i})-K)$ such that $N=\oplus_{i=1}^{t}\langle u_{i}, r(u_{i})\rangle_{F}$. Suppose that $N\cong_{R} tR$, then there exists a set $\{N_{i}\}_{i=1}^{t}$ of submodules of $V$ such that $N_{i}\cong_{R} R$ for all $i$, and $N=\oplus_{i=1}^{t}N_{i}$. Let $S_{i}:=soc(N_{i})$ for all $i$, then   $\{S_{i}\}_{i=1}^{t}$ is an independent set of simple submodules of $I$ (by Remark \ref{ind} and Lemma \ref{lem2.3}, part $4$). Thus, by Lemma \ref{lem2.3} (part $3$),  there exists $u_{i}\in N_{i}-S_{i}\subseteq T^{-1}(S_{i})-K$ such that $N_{i}=\langle u_{i}, r(u_{i})\rangle_{F}$ for all $i$. Conversely, suppose that there exists an independent set $\{S_{i}\}_{i=1}^{t}$  of simple submodules of $I$ and $(u_{i})_{i=1}^{t}\in \prod_{i=1}^{t}(T^{-1}(S_{i})-K)$ such that $N=\oplus_{i=1}^{t}N_{i}$ with $N_{i}:=\langle u_{i}, r(u_{i})\rangle_{F}$ for all $i$. Then $S_{i}=soc(N_{i})$ and $u_{i}\in N_{i}-S_{i}$ so that  $N_{i}\cong_{R}R$ for all $i$ (by Lemma \ref{lem2.3}, part $3$) and so $N\cong_{R}tR$ (by Remark \ref{ind}). 
  Then, the first equivalence follows from Lemma \ref{lem2.2} and the fact that a submodule of $V$ intersects another submodule trivially if and only if it intersects its socle trivially. In addition, if $(u_{i})_{i=1}^{t}, (w_{i})_{i=1}^{t}\in \prod_{i=1}^{t}(T^{-1}(S_{i})-K)$, since $\overline{\langle u_{i}, r(u_{i})\rangle}_{F}=\overline{\langle u_{i}, T(u_{i})\rangle}_{F}=\langle \overline{u_{i}}\rangle_{F}$ and  $\overline{\langle w_{i}, r(w_{i})\rangle}_{F}=\langle \overline{w_{i}}\rangle_{F}$ for all $i$, then $\sum_{i=1}^{t}\langle u_{i}, r(u_{i})\rangle_{F}=\sum_{i=1}^{t}\langle  w_{i}, r(w_{i})\rangle_{F}$ if and only if $\langle\{ \overline{u_{i}}\}_{i=1}^{t}\rangle_{F}=\overline{\sum_{i=1}^{t}\langle u_{i}, r(u_{i})\rangle}_{F}=\overline{\sum_{i=1}^{t}\langle  w_{i}, r(w_{i})\rangle}_{F}=\langle \{ \overline{w_{i}} \}_{i=1}^{t}\rangle_{F}$; or equivalently, $(u_{i})_{i=1}^{t}\sim (w_{i})_{i=1}^{t}$.
\end{proof}

Let $\widehat{}:V\longrightarrow V$ be the mapping given by $v=(v_{1},...,v_{n})\mapsto \widehat{v}:=(v_{1},...,v_{\lfloor n/2 \rfloor},0,...,0)$.

\begin{cor}\label{cor-const-tRsF}
Let $t\in \{1,...,\lfloor n/2 \rfloor\}$, $s\in \{0,..., \lceil n/2 \rceil-t \}$ and $C\leq V$. Then the following holds:

\begin{enumerate}

\item   $C\cong_{R}tR$ (with $(1,1,...,1)\in C$) if and only if $C$ has a generator matrix  of the form 
\[
G := \begin{bmatrix}
\widehat{s_{1}} + k_{1} \\
r(\widehat{s_{1}}) + k_{1} \\
\vdots \\
\widehat{s_{t}} + k_{t} \\
r(\widehat{s_{t}}) + k_{t}\\
\end{bmatrix}_{2t \times n}\;\;\; (\text{with } s_{1}=(1,1,...,1))
\]
where $\{s_{i}\}_{i=1}^{t}\subseteq I$ is a linear independent and  $\{k_{i}\}_{i=1}^{t}\subseteq K$.
 
\item $C\cong_{R}tR \oplus sF$ (with $(1,1,...,1)\in C$) if and only if $C$ has a generator matrix  of the form 

\[
G' := \left[\begin{array}{c}
G\hspace{0.1cm}\\
 \hline 
l_{1}\\
\vdots \\
l_{s}
\end{array}\right]_{(2t+s) \times n}\;\;\; (\text{with } s_{1} \text{ or } l_{1} \text{ equal to } (1,1,...,1))
\]
where $\{s_{i}\}_{i=1}^{t}\subseteq I$ and   $\{l_{i}\}_{i=1}^{t}\subseteq K-\langle \{s_{i}\}_{i=1}^{t} \rangle_{F}$ are linear independent sets, $\{k_{i}\}_{i=1}^{t}\subseteq K$, and $G$ is as in part $1$.

\end{enumerate}

\end{cor}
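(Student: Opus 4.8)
The plan is to read Corollary \ref{cor-const-tRsF} as a concrete, matrix-level rephrasing of Theorem \ref{const-tRsF}, so that the whole argument rests on translating the abstract data $(\{S_i\}_{i=1}^t,(u_i)_{i=1}^t,L)$ of that theorem into the rows of $G$ (resp.\ $G'$), together with the extra bookkeeping needed to force $(1,1,\dots,1)$ into $C$.

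First I would isolate the one computational fact that makes the $\widehat{\phantom{s}}$ notation work: for every $s\in I$ one has $\widehat{s}+r(\widehat{s})=s$, equivalently $T(\widehat{s})=s$. This is a direct coordinate check using that elements of $I$ are fixed by $r$ (Lemma \ref{lem2.2}, part $1$) and have vanishing middle coordinate when $n$ is odd; the point is that $\widehat{s}$ keeps the first $\lfloor n/2\rfloor$ entries while $r(\widehat{s})$ places their mirror image into the complementary positions, so the sum recovers the symmetric vector $s$. Then, given $s_i\in I$ and $k_i\in K$, I set $u_i:=\widehat{s_i}+k_i$ and record the two identities $r(u_i)=r(\widehat{s_i})+k_i$ (because $r$ fixes $k_i$) and $T(u_i)=u_i+r(u_i)=s_i$. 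These say precisely that the two rows attached to index $i$ in $G$ are $u_i$ and $r(u_i)$, and that $\langle u_i,r(u_i)\rangle_F$ has socle $\langle s_i\rangle_F\subseteq I$.

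With this dictionary the two directions of part $1$ become routine applications of the earlier results. For $(\Leftarrow)$: from a matrix $G$ as displayed I put $u_i:=\widehat{s_i}+k_i$; since $s_i\ne 0$ gives $u_i\notin K$, Lemma \ref{lem2.3} (part $3$) yields $\langle u_i,r(u_i)\rangle_F\cong_R R$ with socle $\langle s_i\rangle_F$, linear independence of $\{s_i\}$ makes these socles independent, hence (Remark \ref{ind}) the row space equals $\oplus_i\langle u_i,r(u_i)\rangle_F\cong_R tR$, and $(1,1,\dots,1)=s_1$ is the sum of the first two rows. For $(\Rightarrow)$: I feed $C\cong_R tR$ into Theorem \ref{const-tRsF} with $s=0$ to obtain $u_i$ with $T(u_i)=s_i\in I$, then set $k_i:=u_i+\widehat{s_i}$ and verify $k_i\in K$ by checking $T(k_i)=T(u_i)+T(\widehat{s_i})=s_i+s_i=0$; this exhibits exactly the displayed rows. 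Part $2$ is the same argument for $C=(\oplus_i\langle u_i,r(u_i)\rangle_F)\oplus L$, appending a basis $l_1,\dots,l_s$ of $L\subseteq K$ as the extra rows, the condition $L\cap(\oplus_iS_i)=0$ of Theorem \ref{const-tRsF} reading as joint linear independence of $\{s_i\}\cup\{l_j\}$ and giving $C\cong_R tR\oplus sF$ via the socle-intersection criterion (a submodule meets another trivially iff their socles do).

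The step I expect to be the main obstacle is forcing $(1,1,\dots,1)$ to appear as a designated generator $s_1$ or $l_1$ in the $(\Rightarrow)$ directions, since Theorem \ref{const-tRsF} delivers an arbitrary decomposition. The key observations are that $\mathbf{1}\in K$ forces $\mathbf{1}\in K\cap C=\soc(C)$, that $\oplus_iS_i=T(C)$ is intrinsic (since $T(\langle u_i,r(u_i)\rangle_F)=S_i$ and $T(L)=0$) while the splitting of $T(C)$ into lines $S_i$ and the choice of complement $L$ are free, and that any complement of $T(C)$ in $\soc(C)$ can serve as $L$. Hence if $\mathbf{1}\in T(C)$ I extend $\mathbf{1}$ to a basis of $T(C)$ and lift $\langle\mathbf{1}\rangle$ to an $R$-summand (using Lemma \ref{lem2.3}, parts $3$--$4$) so that $s_1=\mathbf{1}$; otherwise I choose a complement $L\ni\mathbf{1}$ so that $l_1=\mathbf{1}$. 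In the pure case $s=0$ one has $\soc(C)=T(C)\subseteq I$, so $\mathbf{1}\in C$ is possible only when $\mathbf{1}\in I$ (i.e.\ $n$ even) and the first alternative always applies; for $n$ odd both sides of the equivalence are vacuous, which I would note explicitly.
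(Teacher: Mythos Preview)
The paper states Corollary \ref{cor-const-tRsF} without proof, treating it as an immediate consequence of Theorem \ref{const-tRsF}; your proposal supplies exactly the intended elaboration, translating the data $(\{S_i\},(u_i),L)$ into matrix rows via the key identity $T(\widehat{s})=s$ for $s\in I$, and your handling of the parenthetical $(1,1,\dots,1)\in C$ clause---splitting on whether $\mathbf 1\in T(C)$ and exploiting that $T(C)$ is intrinsic while the line decomposition and the complement $L$ are free---is correct and covers the vacuous odd-$n$ case in part~1. This is precisely the route the paper has in mind.
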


\begin{theorem}\label{upperbound}
Let $0\neq C\leq V$, then
\[d(C)\leq \begin{cases}2d(\widehat{soc(C)}) & \mbox{ if } soc(C)\subseteq I\\
                    2d(\widehat{soc(C)})+1 & \mbox{ if }  soc(C)\nsubseteq I
  \end{cases},
  \]
and if $d(\widehat{soc(C)})< \frac{1}{2}\lfloor \frac{n}{2} \rfloor$, these upper bounds for $d(C)$ are better than the Singleton bound. In particular, if $C\subseteq I$, then  $d(C)=2d(\widehat{C})$.
\end{theorem}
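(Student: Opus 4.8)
The plan is to reduce everything to a single weight identity: a palindromic vector carries twice the weight of its ``first half.'' Since $K$ is exactly the set of $r$-fixed vectors by Lemma \ref{lem2.2}(1), every $v\in K$ satisfies $v_i=v_{n+1-i}$, so
$\mathrm{wt}(v)=2\,\mathrm{wt}(\widehat{v})+\epsilon(v)$, where $\epsilon(v)\in\{0,1\}$ records whether the central coordinate $v_{(n+1)/2}$ is nonzero when $n$ is odd (and $\epsilon(v)=0$ identically when $n$ is even, and whenever $v\in I$, since the basis $\gamma$ of $I$ never touches the central coordinate). I would establish this identity first, as it drives the whole statement.

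Next I would exploit that $\soc(C)=K\cap C\subseteq C$ by Lemma \ref{lem2.3}(1), so every nonzero word of $\soc(C)$ is a word of $C$ and hence $d(C)\le d(\soc(C))$. Choosing $v_0\in\soc(C)$ with $\widehat{v_0}\neq 0$ and $\mathrm{wt}(\widehat{v_0})=d(\widehat{\soc(C)})$, the identity gives $d(C)\le \mathrm{wt}(v_0)=2d(\widehat{\soc(C)})+\epsilon(v_0)$. If $\soc(C)\subseteq I$ then $\epsilon(v_0)=0$ and the first bound follows; otherwise $\epsilon(v_0)\le 1$ yields the second. For the ``in particular'' clause, $C\subseteq I\subseteq K$ forces $\soc(C)=K\cap C=C$, the map $v\mapsto\widehat{v}$ is injective on $I$ (a palindromic word with zero centre is determined by its first half), and the identity sharpens to $\mathrm{wt}(c)=2\,\mathrm{wt}(\widehat{c})$ for every $c\in C$, whence $d(C)=2d(\widehat{C})$. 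The degenerate case $\widehat{\soc(C)}=0$ (only $\soc(C)=\langle e_{(n+1)/2}\rangle$ with $n$ odd) is disposed of separately: then $e_{(n+1)/2}\in C$ gives $d(C)\le 1$ at once.

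For the comparison with Singleton I would first convert the hypothesis $d(\widehat{\soc(C)})<\tfrac12\lfloor n/2\rfloor$ into the integer inequality $2d(\widehat{\soc(C)})+1\le\lfloor n/2\rfloor$, splitting on the parity of $\lfloor n/2\rfloor$; thus the theorem's bound never exceeds $\lfloor n/2\rfloor$. On the other side, $\soc(C)\subseteq K$ together with Lemma \ref{lem2.2}(1) gives $\dim\soc(C)\le\lceil n/2\rceil$, so the Singleton value governed by the socle, $n-\dim\soc(C)+1$, is at least $\lfloor n/2\rfloor+1$, strictly above the theorem's bound.

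The main obstacle is precisely this last comparison, since the Singleton bound for $C$ is $n-\dim_F(C)+1$, not $n-\dim\soc(C)+1$, and for codes with $R$-summands one has $\dim_F(C)=2t+s>t+s=\dim\soc(C)$; for instance $C=V$ with $n\ge 6$ satisfies the hypothesis yet $n-\dim_F(C)+1=1<2$. Hence the clean inequality above secures the improvement exactly when $\dim_F(C)\le\lceil n/2\rceil$ (in particular for every semisimple code, and so for the identity case $C\subseteq I$), and I would phrase the Singleton comparison under that dimensional proviso, treating the reconciliation of $\dim_F(C)$ with $\dim\soc(C)$ as the delicate point.
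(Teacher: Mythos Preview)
Your derivation of the two inequalities is essentially the paper's argument. Both rest on the palindromic weight identity $\mathrm{wt}(v)=2\,\mathrm{wt}(\widehat v)+\epsilon(v)$ for $v\in K$; the paper phrases this via pairwise distances $d(v,w)$ and a three-way case split according to membership in $I$ versus $K\setminus I$, while you phrase it via weights and a single $\epsilon$-term, but the content is identical. The passage from $\soc(C)$ to $C$ via $d(C)\le d(\soc(C))$ and the equality clause for $C\subseteq I$ are also the same as in the paper.

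Your hesitation over the Singleton comparison is well placed, and in fact you are more careful here than the paper itself. The paper's entire justification of that clause is the sentence ``By Lemmas~\ref{lem2.2} and~\ref{lem2.3} (part~$1$), $\dim(\widehat{\soc(C)})\leq \lceil n/2\rceil$. Thus, if $d(\widehat{\soc(C)})< \frac{1}{2}\lfloor n/2\rfloor$ these bounds for $d(C)$ are better than the Singleton bound'': it never addresses the discrepancy you identify between $\dim_F(C)$ and $\dim\soc(C)$. Your example $C=V$ with $n\ge 6$ even (Singleton gives $n-n+1=1$, while $\soc(V)=I$ and $d(\widehat I)=1<\tfrac12\lfloor n/2\rfloor$, so the theorem's bound is $2$) shows the clause fails as literally stated. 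The proviso $\dim_F(C)\le\lceil n/2\rceil$ you propose is a legitimate repair rather than a gap in your argument; it covers in particular the semisimple case and the identity $d(C)=2d(\widehat C)$ for $C\subseteq I$.
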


\begin{proof}

Note that, by Lemma \ref{lem2.2} (part $1$), any $u=(u_{1},...,u_{n})\in I$ is such that $u_{i}=u_{n-i+1}$ for $i=1,..,\lfloor n/2 \rfloor$. Let $v,w\in soc(C)$. Since $soc(C)\subseteq K$ (by  Lemma \ref{lem2.3}, part $1$) and $I\subseteq K$ (by Lemma \ref{lem2.2}, part $3$), then $v,w\in I$; or $v,w\in K-I$; or $v\in I$ and $w\in K-I$.  If $v,w\in I$, then $d(v,w)=2d(\widehat{v},\widehat{w})$ because $v_{i}=v_{n-i+1}$ and $w_{i}=w_{n-i+1}$ for $i=1,..,\lfloor n/2 \rfloor$ and the fact that $v$ and $w$ both have middle entry equal to $0$ when $n$ is odd. The other two cases  are only possible when $n$ is odd (by Lemma  \ref{lem2.2}, part $3$).  If $n$ is odd, $v,w\in K-I$ and  $Q\in V$ is the codeword having the middle entry equal to $1$ and zero elsewhere, then $v=v'+\lambda_{1}Q$ and $w=w'+\lambda_{2}Q$ where $v',w'\in I$ and $\lambda_{1},\lambda_{2}\in F-\{0\}$ (by Lemma  \ref{lem2.2}, part $3$). It is easy to check that $d(v,w)=d(v',w')+d(\lambda_{1}Q, \lambda_{2}Q)$ so that $d(v,w)=2d(\widehat{v},\widehat{w})$ if $\lambda_{1}=\lambda_{2}$, or $d(v,w)=2d(\widehat{v},\widehat{w})+1$ if $\lambda_{1}\neq \lambda_{2}$. Similarly, it can be shown that if $n$ is odd, $v\in K$  and $w\in K-I$, then $d(v,w)=2d(\widehat{v},\widehat{w})+1$. Hence for $v,w\in soc(C)$
\[d(v,w)\begin{cases}=2d(\widehat{v},\widehat{w}) & \mbox{ if } v,w\in I\\
                     \leq 2d(\widehat{v},\widehat{w})+1 & \mbox{ if } v,w\in K-I\\
                     = 2d(\widehat{v},\widehat{w})+1 & \mbox{ if }  v\in K \wedge w\in K-I\\                     
  \end{cases}.
  \]
Therefore $d(C)\leq d(soc(C))=2d(\widehat{soc(C)})$ if $ soc(C)\subseteq I$; and $d(C)\leq 2d(\widehat{soc(C)})+1$ if $soc(C)\nsubseteq I$. By Lemmas \ref{lem2.2} and \ref{lem2.3} (part $1$), $\dim(\widehat{soc(C)})\leq \lceil \frac{n}{2}\rceil$. Thus, if $d(\widehat{soc(C)})< \frac{1}{2}\lfloor \frac{n}{2} \rfloor$ these bounds for $d(C)$ are better than the Singleton bound. The rest follows from Lemma \ref{lem2.3} (part $1$).
\end{proof}

By Lemma \ref{lem2.3} (part $1$), computing the submodules of $V$ isomorphic to $sF$ is the same as computing the subspaces of $K$ of dimension $s$ for $0\leq s \leq \lceil n/2 \rceil $. Now, a method to compute all the submodules of $V$ isomorphic to $tR\oplus sF$ with $1\leq t \leq \lfloor n/2 \rfloor $ and $0\leq s \leq \lceil n/2 \rceil -t$ will be provided.

\begin{remark}[\textbf{Method to compute the submodules of $\mathbf{V}$ isomorphic to} $\mathbf{tR\oplus sF}$]\label{rmk-const-tR}
If $W$ is a subspace of $I$ with basis $\{s_{i}\}_{i=1}^{t}$, $S_{i}:=\langle s_{i}\rangle_{F}$ for all $i$, and  $\tau=\{(u_{i1})_{i=1}^{t},...,(u_{il})_{i=1}^{t}\}\subset \prod_{i=1}^{t}(T^{-1}(S_{i})-K)$ is a set of representatives of equivalence classes under $\sim$, the modules $N_{j}:=\oplus_{i=1}^{t}\langle u_{ij}, r(u_{ij})\rangle_{F}$ with $j=1,...,l$ are exactly the submodules of $V$ (isomorphic to $tR$) that are socle-extensions of $W$ (by Theorem \ref{const-tRsF}). Thus, to compute these modules $N_{j}'s$, one may start by constructing a module $N_{1}=\oplus_{i=1}^{t}\langle u_{i1}, r(u_{i1})\rangle_{F}$ taking an arbitrary element $(u_{i1})_{i=1}^{t}\in \prod_{i=1}^{t}(T^{-1}(S_{i})-K)=\prod_{i=1}^{t} \left( \sqcup_{\lambda \in F-\{0\} } \lambda \widehat{s}_{i} + K \right)$ (here $\sqcup$ denotes the disjoint union of sets). Later, if $k\geq 2$ and $\left(\prod_{i=1}^{t}(T^{-1}(S_{i})-K\right.$ $\left.)\right)-\cup_{j=1}^{k-1}N_{j}\neq \emptyset$, continue constructing $N_{k}=\oplus_{i=1}^{t}\langle u_{ik}, r(u_{ik})\rangle_{F}\cong_{R}tR$ with $(u_{ik})_{i=1}^{t}\in \left(\prod_{i=1}^{t}(T^{-1}(S_{i})\right. $ $ \left.-K)\right)-\cup_{j=1}^{k-1}N_{j}$. This process should be repeated for every subspace $W$ of $I$ of dimension $t$ to obtain all the submodules of $V$ isomorphic to $tR$.   
Now, if $N\leq V$ is such that $N\cong_{R}tR$ one may take $J_{1}\leq K$ of dimension $s$ such that $J_{1}\cap soc(N)=0$ and construct the module $N+ J_{1}$. In addition, if $k\geq 2$, and $J_{k}\leq K$  is a subspace dimension $s$ such that $J_{k}\cap soc(N)=0$ and $J_{k}\nsubseteq N+J_{u}$ for $u=1,..., k-1$, continue the process constructing $N+J_{k}\cong_{R}tR\oplus sF$.  This process should be repeated for every submodule $N$ isomorphic to $tR$ to obtain all the submodules of $V$ isomorphic to $tR\oplus sF$. By Corollary \ref{cor-const-tRsF} (part $2$) one can construct similarly all the submodules of $V$ isomorphic to $tR\oplus sF$ containing $\mathbf{1}$.
\end{remark}

\section{Counting reversible and  reversible-complementary codes over $\mathbb{F}_{4}$} \label{s2}

This section presents different formulas for counting  submodules of $V$ based on their isomorphism type.


 \begin{lemma}\label{lem3.3} 
 Let $v \in V$ such that $S:=\langle T(v) \rangle_F$ is a simple module. Let $0\neq M\leq V$ and $\mathcal{N}=\{N \leq M: N\cong_R R \}$. The following holds:

\begin{enumerate}

\item If $v\in M$, then  $|\{N \in \mathcal{N}: soc(N)=S\}| = 4^{\dim(soc(M)) -1}$.

\item If $M-soc(M)\neq \emptyset$, then $|\mathcal{N}|= \dfrac{4^{\dim (M)}-4^{\dim(soc(M))}}{12}$.

\end{enumerate}
\end{lemma}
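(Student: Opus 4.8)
The plan is to prove both parts by a generator-counting (orbit-counting) argument, whose arithmetic backbone is a single structural fact: a submodule $N\cong_R R$ is two-dimensional over $F$, so $|N|=4^2=16$ and $|\soc(N)|=4$, whence $N$ has exactly $16-4=12$ elements outside its socle. By Lemma \ref{lem2.3} (part $3$) every such element is a \emph{generator}, i.e.\ $\langle m, r(m)\rangle_F=N$ for each $m\in N-\soc(N)$ (since $m\notin K$ forces $\dim\langle m,r(m)\rangle_F=2$, hence equality with $N$). This also shows distinct copies of $R$ inside $M$ never share a generator. Thus in each count below the recipe is the same: enumerate the admissible generators, then divide by $12$.

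For part $1$, I would first record that for any $m\in M-\soc(M)$ the module $N=\langle m, r(m)\rangle_F$ satisfies $\soc(N)=I\cap N=\langle T(m)\rangle_F$ (this is exactly the identity extracted in the proof of Lemma \ref{lem2.3}, part $4$, together with part $1$). Consequently $\soc(N)=S=\langle T(v)\rangle_F$ if and only if $T(m)\in S-\{0\}$, that is $T(m)=\lambda T(v)$ with $\lambda\in F-\{0\}$. Writing this as $T(m-\lambda v)=0$ and using $K\cap M=\soc(M)$ (Lemma \ref{lem2.3}, part $1$), the admissible generators are precisely the elements $m=\lambda v+k$ with $\lambda\in F-\{0\}$ and $k\in\soc(M)$. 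Since $v\notin K$ (because $T(v)\neq 0$), these representatives are pairwise distinct, giving $3\cdot 4^{\dim(\soc(M))}$ of them. Each $N$ with $\soc(N)=S$ accounts for exactly its $12$ generators, so
\[
|\{N\in\mathcal{N}:\soc(N)=S\}|=\frac{3\cdot 4^{\dim(\soc(M))}}{12}=4^{\dim(\soc(M))-1}.
\]

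For part $2$, I would use a direct fibre count (equivalently, sum part $1$ over the admissible socles). Consider the map $m\mapsto\langle m, r(m)\rangle_F$ from $M-\soc(M)$ to $\mathcal{N}$. It is well defined, since $m\notin K$ makes the image isomorphic to $R$ (Lemma \ref{lem2.3}, part $3$), and it is surjective, since any $N\cong_R R$ has a generator $m\in N-\soc(N)\subseteq M-\soc(M)$. The fibre over $N$ is exactly $N-\soc(N)$, of size $12$, whence $|\mathcal{N}|=|M-\soc(M)|/12=(4^{\dim(M)}-4^{\dim(\soc(M))})/12$. Alternatively, the admissible socles are the $1$-dimensional subspaces of $T(M)$, numbering $(4^{\dim(T(M))}-1)/3$ with $\dim(T(M))=\dim(M)-\dim(\soc(M))$ by rank--nullity applied to $T|_M$; multiplying by $4^{\dim(\soc(M))-1}$ from part $1$ and simplifying yields the same formula.

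The main obstacle is the bookkeeping behind the uniform multiplicity $12$: I must confirm that \emph{every} $m\in N-\soc(N)$ generates $N$ (so each copy is counted the same number of times) and that no generator is shared between two distinct copies; both reduce to the two-dimensionality of $\langle m,r(m)\rangle_F$ for $m\notin K$, already available from Lemma \ref{lem2.3}. The only other delicate point is in part $1$, namely translating the socle condition $\soc(N)=S$ into the affine description $m\in\lambda v+\soc(M)$ and checking the representatives are distinct, which hinges on $v\notin K$. Once these are settled, both formulas fall out by arithmetic.
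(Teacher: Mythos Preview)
Your proposal is correct and follows essentially the same approach as the paper: both parts are proved by counting generators $m\in M-\soc(M)$ (equivalently $m\in(T^{-1}(S)-K)\cap M$ for part~1) and dividing by the uniform multiplicity $12=|N-\soc(N)|$, exactly as the paper does via the coset decomposition $\sqcup_{\lambda\in F-\{0\}}(\lambda v+\soc(M))$ in part~1 and the partition $\{N-\soc(N):N\in\mathcal{N}\}$ of $M-\soc(M)$ in part~2. Your fibre-counting formulation of part~2 and the alternative via summing part~1 over the simple submodules of $T(M)$ are minor repackagings of the same argument.
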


\begin{proof}

\begin{enumerate}
 

\item $\mathcal{N}_{S}=\{N \leq \mathcal{N}: soc(N)=S \}$. Note that $N \in \mathcal{N}_{S}$ if and only if there exists $x \in \left( T^{-1}(S) -  K\right)\cap M$ such that $N=\langle x, r(x) \rangle_F$ (by Lemma \ref{lem2.3}, part $3$). Suppose $v\in M$, then $v\in \left( T^{-1}(S) -  K\right)\cap M$ and $\langle v, r(v) \rangle_F\in \mathcal{N}_{S}$, so that $\mathcal{N}_{S}\neq \emptyset$. On the other hand, since $ T^{-1}(S)=  \sqcup_{ \lambda \in F} ( \lambda v+K ) $ and  $(\lambda v+K)\cap M=\lambda v+(K\cap M)$ for all $\lambda \in F$ (because $v\in M$), then $\left(T^{-1}(S) -  K\right)\cap M=\sqcup_{ \lambda \in F-\{0\}}  \lambda v+ (K\cap M)=\sqcup_{ \lambda \in F-\{0\}}  \lambda v+ soc(M)$ where the last equality is by Lemma \ref{lem2.3} (part $1$). Hence there are $|\left(T^{-1}(S) -  K\right)\cap M|=|\sqcup_{ \lambda \in F-\{0\}}  \lambda v+ soc(M)|=\sum_{ \lambda \in F-\{0\}}  |\lambda v+ soc(M)|=3 |soc(M)|$ elements $x\in M$  such that $\langle x , r(x) \rangle_F\in \mathcal{N}_{S}$, some of which generate the same element of $\mathcal{N}_{S}$. In fact, if $N \in \mathcal{N}_{S}$, each one of the $12$ elements $y\in N-S$ satisfies that $ \langle y , r(y) \rangle_F =N$. This implies that $\mathcal{N}_{S}$ has size $3|soc(M)|/12=\left(3 \cdot 4^{\dim (soc(M))}\right)/12=4^{\dim(soc(M)) -1}$.

\item Suppose $M-soc(M)\neq \emptyset$, and $W:=\{ N-soc(N): N\in \mathcal{N} \}$. It will be shown that $W$ is a partition of $M- soc(M)$. Let $W_{i}\in W$ be such that $W_{i} = N_{i}- soc(N_{i})$ with $N_{i}\in \mathcal{N}$ for $i=1,2$. Suppose $W_{1}\cap W_{2} \neq \emptyset$ and let $u \in W_{1} \cap W_{2}$. Since $N_{1}=\left\langle u, r(u) \right\rangle_{F} = N_{2}$ (by Lemma \ref{lem2.3}, part $3$),  $W_1=W_2$. On the other hand, if $m \in M- soc(M)$, then  $N:=\langle m, r(m) \rangle_F \in \mathcal{N}$ (by Lemma \ref{lem2.3}, part $3$) and $m \in N- soc(N)$. Therefore  $|M- soc(M)|=\sum_{Y\in W} |Y|=|W||Y|=|W|\cdot 12 $ and so $|W|=\left( 4^{\dim(M)}-4^{\dim( soc(M))}\right)/12$.
\end{enumerate}
\end{proof}

 If $q$ is a power of a prime number, $[y]_{q}:=(q^{y}-1)/(q-1)$ for all $y\in \mathbb{Z}_{>0}$. The Gaussian binomial coefficient (or $q$-binomial coefficient) $\binom{m}{t}_{q}$ is the number subspaces of $\mathbb{F}_{q}^{m}$ of dimension $t$ (see \cite{goldman, gasper}). This number is described by the formula $\binom{m}{t}_{q}=\dfrac{[m]_{q}!}{[t]_{q}!\cdot[m-t]_{q}!}$   where $ [x]_{q}! = [1]_{q}\cdot[2]_{q}\cdots[x]_{q} $ if  $x\neq 0$, and $ [x]_{q}! = 1$ if $x=0$.\\
If $W, H\leq V$, and $s$ and $t$ are non-negative integers, then 
$\mathcal{L}_{t,s}(W):=\{M\leq W : \, M\cong_{R} tR\oplus sF \}$, $\mathcal{L}'_{t,s}(W):=\{M\leq \mathcal{L}_{t,s}(W) : \mathbf{1}\subset M \}$, and $\mathcal{U}_{s}(W, H):=\{L\leq H: W\cap L=0 \wedge  L \cong sF\}$.

\begin{theorem}\label{thm3.4}
Let $s$ and $t$ be non-negative integers, and $0\neq N,M \leq V$.  The following holds:
\begin{enumerate}
\item $|\mathcal{L}_{0, s}(M)|=  \binom{\dim(soc(M))}{s}_{4}$. 

\item If $soc(N)\subsetneq soc(M)$ and $1\leq k \leq \dim(soc(M)/soc(N)) $, $\mathcal{U}_{k}(N,$ $ M)\neq \emptyset$ and $
 |\mathcal{U}_{k}(N, M)|=\prod_{i=0}^{k-1} \dfrac{4^{\dim(soc(M))}- 4^{\dim(soc(N))+i}}{4^{\dim(soc(N))+k}- 4^{\dim(soc(N))+i}}
$.

\item If $\mathcal{L}_{t,0}(M)\neq \emptyset$, the mapping $\alpha: \mathcal{L}_{t,0}(M)\to  \mathcal{L}_{0, t}(I\cap M)$ given  by $\alpha(Y):=\soc(Y)$ is surjective and   $|\alpha^{-1}(X)|=4^{t\dim(soc(M))-t^2}$ for all $X\in   \mathcal{L}_{0, t}(I\cap M)$. In particular, $|\mathcal{L}_{t,0}(M)|= \binom{\dim(I\cap M)}{t}_{4} \cdot  4^{t\dim(soc(M))-t^2}$.

\item  If  $1 \leq t\leq  \lfloor n/2 \rfloor$, $1\leq s \leq  \lceil n/2 \rceil - t$, $N\in \mathcal{L}_{t,0}(V)$ and $L \in \mathcal{U}_{s}(N, V)$,  $|\mathcal{L}_{t,s}(V)|=|\mathcal{L}_{t,0}(V)|(|\mathcal{U}_{s}(N, V)|/|\mathcal{U}_{s}(N, N\oplus L)|)$.

\item If $s\geq 2$ and $W\in \mathcal{U}_{s-1}(\mathbf{1}, V)$,  $|\mathcal{L}'_{0,s}(V)|=|\mathcal{U}_{s-1}(\mathbf{1}, V)|/$ $|\mathcal{U}_{s-1}(\mathbf{1},$ $ \mathbf{1}\oplus W)|$.

\item If  $\mathcal{L}'_{t,0}(V)\neq \emptyset$ and $W\in \mathcal{U}_{t-1}(\mathbf{1}, I)$,
$|\mathcal{L}'_{t,0}(V)|:= (|\mathcal{U}_{t-1}(\mathbf{1}, I)|/$ $|\mathcal{U}_{t-1}( $ $\mathbf{1},\mathbf{1}\oplus W)|) \cdot  4^{t\lceil n/2 \rceil - t^{2} } $. 
 
\item If  $1 \leq t\leq  \lfloor n/2 \rfloor$, $1\leq s \leq  \lceil n/2 \rceil - t$, $N\in\mathcal{L}_{t,0}(V)$, then $|\mathcal{L}'_{t,s}(V)|= |\mathcal{L}'_{t,0}(V)|\cdot  (|\mathcal{U}_{s}$ $(N, V)|/ |\mathcal{U}_{s}(N, N\oplus J_{0})|)+ |\mathcal{L}_{t,0}(V)-\mathcal{L}'_{t,0}(V)|(|\mathcal{U}_{s-1}(N\oplus \mathbf{1}, V)|/ |\mathcal{U}_{s-1}(N\oplus \mathbf{1},N\oplus \mathbf{1} \oplus J_{1} )|)$ where $J_{0}\in \mathcal{U}_{s}(N, V)$ and $J_{1}\in \mathcal{U}_{s-1}(N\oplus \mathbf{1}, V)$, if $n$ is even; and  $|\mathcal{L}'_{t,s}(V)|=|\mathcal{L}_{t,0}(V)| (|\mathcal{U}_{s-1}(N$ $\oplus  \mathbf{1}, V)|/$ $|\mathcal{U}_{s-1}(N\oplus  \mathbf{1},N\oplus  \mathbf{1}\oplus J_{2})|)$ where $J_{2}\in \mathcal{U}_{s-1}(N\oplus \mathbf{1},V)$, if $n$ is odd.

\end{enumerate}

\end{theorem}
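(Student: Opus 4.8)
The plan is to count $\mathcal{L}'_{t,s}(V)$ by sorting its members according to where the repetition code $\mathbf{1}$ sits inside them, and then to reduce each piece to the quantities already produced in parts (4) and (6). First I would record the facts that drive everything. By Lemma \ref{lem2.3}(1), $\mathbf{1}\in K$ gives $M\in\mathcal{L}'_{t,s}(V)$ iff $M\cong_R tR\oplus sF$ and $\mathbf{1}\in K\cap M=\soc(M)$. Writing $M=\left(\oplus_{i=1}^{t}\langle u_i,r(u_i)\rangle_F\right)\oplus L$ as in Theorem \ref{const-tRsF}, the ``$R$-socle'' $T(M)=\oplus_{i=1}^{t}\langle u_i+r(u_i)\rangle_F$ is an intrinsic subspace of $M$ of dimension $t$ contained in $I$, and it coincides with $\soc(N)$ for \emph{every} summand $N\cong_R tR$ of $M$ (since $\soc(N)=T(N)\subseteq T(M)$ have equal dimension $t$). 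Finally, $\mathbf{1}=\sum_{i=1}^{\lfloor n/2\rfloor}(e_i+r(e_i))\in I$ when $n$ is even, whereas $\mathbf{1}\in K\setminus I$ when $n$ is odd, by Lemma \ref{lem2.2}(3).

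Next I would split $\mathcal{L}'_{t,s}(V)=A\sqcup B$, where $A=\{M:\mathbf{1}\in T(M)\}$ and $B=\{M:\mathbf{1}\in\soc(M)\setminus T(M)\}$. This is a genuine partition because $T(M)$ is intrinsic; moreover $\mathbf{1}\in T(M)$ holds iff $M$ has a summand $N\cong_R tR$ with $\mathbf{1}\in N$, equivalently $N\in\mathcal{L}'_{t,0}(V)$: if $\mathbf{1}\in N$ then $\mathbf{1}\in N\cap K=\soc(N)=T(M)$, and conversely $\mathbf{1}\in T(M)=\soc(N)\subseteq N$ for any $R$-summand $N$. Dually, $M\in B$ iff $\mathbf{1}\notin N$ for every $R$-summand $N$, i.e. $N\in\mathcal{L}_{t,0}(V)\setminus\mathcal{L}'_{t,0}(V)$, and then $\mathbf{1}$ may be taken as one of the free generators of the $F$-part.

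For $A$ I would run the reduction of part (4): fixing $N\in\mathcal{L}'_{t,0}(V)$, the map $L\mapsto N\oplus L$ on $\mathcal{U}_s(N,V)$ has fibers of size $|\mathcal{U}_s(N,N\oplus J_0)|$, so that part (4) applied to the subfamily with $\mathbf{1}$ in the $R$-part yields $|A|=|\mathcal{L}'_{t,0}(V)|\,|\mathcal{U}_s(N,V)|/|\mathcal{U}_s(N,N\oplus J_0)|$. For $B$ I would absorb $\mathbf{1}$ into the $F$-part, writing each $M\in B$ as $N\oplus\langle\mathbf{1}\rangle\oplus L'$ with $N\in\mathcal{L}_{t,0}(V)\setminus\mathcal{L}'_{t,0}(V)$ and $L'\in\mathcal{U}_{s-1}(N\oplus\mathbf{1},V)$, and apply the same part-(4) fibration to the module $N\oplus\mathbf{1}$ extended by $(s-1)F$; its fibers have size $|\mathcal{U}_{s-1}(N\oplus\mathbf{1},N\oplus\mathbf{1}\oplus J_1)|$, giving the second summand. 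Adding the two produces the even-$n$ formula. When $n$ is odd, $\mathbf{1}\in K\setminus I$ forces $\mathbf{1}\notin T(M)$ for all $M$, so $A=\emptyset$; also $\mathcal{L}'_{t,0}(V)=\emptyset$ because the socle of any $R$-module lies in $I\not\ni\mathbf{1}$, whence $\mathcal{L}_{t,0}(V)\setminus\mathcal{L}'_{t,0}(V)=\mathcal{L}_{t,0}(V)$ and only the $B$-term survives, which is the odd-$n$ formula with $J_2$ in place of $J_1$.

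The step I expect to be the main obstacle is the reduction for $B$: I must check that every $M\in B$ really admits a decomposition in which $\mathbf{1}$ is a designated $F$-generator with $\langle\mathbf{1}\rangle\oplus L'\cong_R sF$ and $L'\cap(N\oplus\mathbf{1})=0$, and that the resulting correspondence $(N,L')\mapsto N\oplus\langle\mathbf{1}\rangle\oplus L'$ has precisely the fibers $|\mathcal{U}_{s-1}(N\oplus\mathbf{1},N\oplus\mathbf{1}\oplus J_1)|$ demanded by the part-(4) count, i.e. that passing from $s$ to $s-1$ and from $N$ to $N\oplus\mathbf{1}$ leaves the complement and independence conditions of Theorem \ref{const-tRsF} intact. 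The disjointness of $A$ and $B$, and the claim that $A$-membership is intrinsic, both rest on the dimension identity $\dim T(M)=t$, so I would establish that carefully first.
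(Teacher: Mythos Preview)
Your proposal is correct and follows essentially the same route as the paper. The paper also partitions $\mathcal{L}'_{t,s}(V)$ into the pieces you call $A$ and $B$: it phrases $A$ as $\{M:\mathcal{L}'_{t,0}(M)\neq\emptyset\}$ rather than $\{M:\mathbf{1}\in T(M)\}$, but your observation that $\soc(N)=T(M)$ for every $tR$-summand $N$ shows these two descriptions coincide, and your use of $T(M)$ as an intrinsic invariant makes the well-definedness of the split a bit more transparent than in the paper. The subsequent counts of $|A|$ and $|B|$ via the part-(4) fibration, and the observation that $A=\emptyset$ for odd $n$ because $\mathbf{1}\notin I\supseteq T(M)$, are exactly the arguments the paper gives.
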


\begin{proof}

\begin{enumerate}

 \item  It is clear. 

\item Let $soc(N)\subsetneq soc(M)$. It is clear that if $k=1$, the statement holds. Let $2\leq k \leq \dim(soc(M)/$ $soc(N))$ and $b_{i}:=\binom{\dim(soc(N))+i}{1}_{4}$ for $i=1,...,k-1$.  Note that (up to reordering) $\{U_{i}\}_{i=1}^{k}$  is an  independent set  of simple submodules of $M$ such that  $\oplus_{i=1}^{k} U_{i}\in \mathcal{U}_{k}(N, M)$ if and only if $U_{1}\nsubseteq N$ and $U_{i}\nsubseteq N \oplus( \oplus_{j=1}^{i-1}U_{j})$ for all $i=2,...,k$. To chose $U_{1}\leq M$ simple such that $N+U_{1}=N\oplus U_{1}$ there are $\binom{ \dim(soc(M)) }{1}_{4} - \binom{\dim(soc(N))}{1}_{4}$ choices; to chose $U_{2}\leq M$ simple such that $(N\oplus U_{1})+U_{2}=N\oplus (U_{1}\oplus U_{2})$ there are $\left[\binom{\dim(soc(M))}{1}_{4}- b_{1}\right]$ choices; to chose $U_{3}\leq M$ simple such that $(N\oplus (U_{1}\oplus U_{2}))+U_{3}=N\oplus (U_{1}\oplus U_{2} \oplus U_{3})$ there are $\left[\binom{\dim(soc(M))}{1}_{4}- b_{2}\right]$ choices. Thus, in general, to construct an independent set $\{U_{i}\}_{i=1}^{k}$ of simple submodules (of $M$) such that $N+ (\sum_{i=1}^{k} U_{i})=N\oplus (\oplus_{i=1}^{n} U_{i})$, there are
$\frac{1}{k!}\cdot \prod_{i=0}^{k-1}\left[\binom{\dim(soc(M))}{1}_{4}- b_{i} \right] =    \frac{1}{k!}\cdot \prod_{i=0}^{k-1} \frac{4^{\dim(soc(M))}- 4^{\dim(soc(N))+i}}{3}$
possibilities, but some of these sets produce the same direct sum with $N$. Let $\{Z_{i}\}_{i=1}^{k}$  be an independent collection of simple submodules of $M$ such that $\oplus_{i=1}^{k} Z_{i}\in \mathcal{U}_{k}(N, M)$. Note that (up to reordering) $\{J_{i}\}_{i=1}^{k}$ is an independent collection of simple submodules of $W:=N\oplus (\oplus_{i=1}^{k} Z_{i})$ such that $W=N\oplus (\oplus_{i=1}^{k} J_{i})$ if and only if $J_{1}\nsubseteq N$ and $J_{i}\nsubseteq N \oplus( \oplus_{l=1}^{i-1}J_{l})$ for all $l=2,...,k$. Hence, to construct an independent set $\{J_{i}\}_{i=1}^{k}$ of simple submodules (of $W$) such that $\oplus_{i=1}^{k} J_{i}\in \mathcal{U}_{k}(N, W)$ there are  
$
 \frac{1}{k!}\cdot \prod_{i=0}^{k-1} \left[\binom{\dim(soc(W))}{1}_{4}- b_{i}\right]        =\frac{1}{k!}\cdot \prod_{i=0}^{k-1} \frac{4^{\dim(soc(N))+k}- 4^{\dim(soc(N))+i}}{3}                                                               
$
possibilities. Hence $|\mathcal{U}_{k}(N,M)|=\prod_{i=0}^{k-1} \frac{4^{\dim(soc(M))}- 4^{\dim(soc(N))+i}}{4^{\dim(soc(N))+k}- 4^{\dim(soc(N))+i}}.
$

 \item By Lemma \ref{lem2.3} (part $4$), $\alpha$ is surjective with codomain $\mathcal{L}_{0, t}(I\cap M)$. Thus, the equivalence relation $\backsim$ in $\mathcal{L}_{t,0}(M)$ defined by $U \backsim U'$ if $\alpha(U)=\alpha(U')$ is such that $|\mathcal{L}_{t,0}(M)/\backsim |=|\mathcal{L}_{0, t}(I\cap M)|$. On the other hand, let $X=S_1 \oplus \cdots \oplus S_t \in \mathcal{L}_{0, t}(I\cap M)$. Then each $S_{i}$ extents to $4^{\dim(soc(M)) -1}$ modules in $\mathcal{L}_{1,0}(M)$ (by Lemma \ref{lem3.3}, part $1$), and so there are $(4^{\dim(soc(M)) -1})^t$ subsets of size $t$ of $\mathcal{L}_{1,0}(M)$, formed by socle-extensions of the $S_{i}$'s, whose sum is a socle-extension of $X$. Some of these produce the same socle-extension of $X$.  In fact, if $T \in \mathcal{L}_{t,0}(M)$ is such that $T = T_1 \oplus \cdots \oplus T_t $ with $soc(T_{i})=S_{i}$ for all $i$, $T$ can be written as a sum of socle-extensions of the $S_{i}$'s as many times as $(4^{t-1})^{t}$, which is the number of sets $\{Y_i\}_{i=1}^{t}\subset \mathcal{L}_{1,0}(T)$   with $S_{i}=soc(Y_i)$, by Lemma \ref{lem3.3} (part $1$). Hence, if $ \overline{U} \in \mathcal{L}_{t,0}(M)/\backsim$, $|\overline{U}|=|\alpha^{-1}(soc(U))|$ $=(4^{\dim((soc(M))-1})^{t}/(4^{t-1})^{t}$. Therefore,
$ |\mathcal{L}_{t,0}(M)|= |\mathcal{L}_{0, t}(I\cap M)| \ |\overline{U}|= \binom{\dim(I\cap M)}{t}_4 \cdot $ $ 4^{t \dim((soc(M))-t^{2}}$.

\item Let $N\in \mathcal{L}_{t,0}(V)$ and $L \in \mathcal{U}_{s}(N, V)$. Since $N\oplus L\in \mathcal{L}_{t,s}(V)$ and there exist $|\mathcal{U}_{s}(N, N\oplus L)|$ modules in $\mathcal{U}_{s}(N, V)$ whose sum with $N$ is $N\oplus L$,  then $|\mathcal{L}_{t,s}(V)|=|\mathcal{L}_{t,0}(V)|(|\mathcal{U}_{s}(N, V)|/$ $|\mathcal{U}_{s}(N, N\oplus L)|)$.

\item Let $\mathcal{L}'_{0,s}(V)=\{M\in \mathcal{L}_{0,s}(V): \mathbf{1}\subset M\}$. Then $\mathcal{L}'_{0,s}(V)\neq \emptyset$. Note that $M\in \mathcal{L}'_{0,s}(V)$ if and only if there exists $W\in \mathcal{U}_{s-1}(\mathbf{1}, V)$ such that $M=\mathbf{1}\oplus W$. Thus, if $W\in \mathcal{U}_{s-1}(\mathbf{1}, V)$, then  $|\mathcal{L}'_{0,s}(V)|=|\mathcal{U}_{s-1}(\mathbf{1}, V)|/|\mathcal{U}_{s-1}(\mathbf{1},$ $\mathbf{1}\oplus W)|$ (by a similar argument to the one in part $4$).  

\item Let $\mathcal{L}'_{t,0}(V)\neq \emptyset$.  Note that  $M\in \mathcal{L}'_{t,0}(V)$ if and only if $M\in \mathcal{L}_{t,0}(V)$ and   $soc(M)=\mathbf{1}\oplus W \subseteq I$ for some $W\in \mathcal{L}_{0,t-1}(I)$ (by Lemma \ref{lem2.3}, part $4$), this is only possible if $n$ is even. Let  $W\in \mathcal{U}_{t-1}(\mathbf{1}, I)$. Then there are $|\mathcal{U}_{t-1}(\mathbf{1}, I)|/|\mathcal{U}_{t-1}(\mathbf{1}, \mathbf{1}\oplus W)|$ modules in $\mathcal{L}_{0, t}(I)$ containing $\mathbf{1}$, each one with $4^{t\lceil n/2 \rceil-t^2}$ socle-extensions in $\mathcal{L}_{t,0 }(V)$ (by part $3$), and so
$|\mathcal{L}'_{t,0}(V)|=(|\mathcal{U}_{t-1}(\mathbf{1}, I)|/|\mathcal{U}_{t-1}(\mathbf{1}, $ $\mathbf{1}\oplus W)|) 4^{t\lceil n/2 \rceil-t^2}.$

\item  Let  $1 \leq t\leq  \lfloor n/2 \rfloor$, $1\leq s \leq  \lceil n/2 \rceil - t$, and $\mathcal{L}'_{t,s}(V)=\{M\in \mathcal{L}_{t,s}(V): \mathbf{1}\subset M\}$. Then $\mathcal{L}_{1}:=\{M\in \mathcal{L}'_{t,s}(V): \mathbf{1}\subseteq I\cap M \}\neq \emptyset$ if and only if $n$ is even; $\mathcal{L}_{2}:=  \{M\in \mathcal{L}'_{t,s}(V): \mathbf{1}\nsubseteq I\cap M \}\neq \emptyset$ if and only if $n$ is odd; and $\mathcal{L}'_{t,s}(V)= \mathcal{L}_{1} \sqcup \mathcal{L}_{2}$. Let $n$ be even, then $\mathbf{1}\subseteq I$ and so $A:=\{M\in \mathcal{L}_{1}:\mathcal{L}'_{t,0}(M)\neq \emptyset\}\neq \emptyset$. 
 If $N_{0}\in \mathcal{L}'_{t,0}(V)$ and  $J_{0}\in \mathcal{U}_{s}(N_{0}, V)$, then $|A|= |\mathcal{L}'_{t,0}(V)|\cdot  (|\mathcal{U}_{s}(N_{0}, V)|/ |\mathcal{U}_{s}(N_{0}, N_{0}\oplus J_{0})|)$. Note that $\mathcal{L}'_{t,0}(V)\subsetneq \mathcal{L}_{t,0}(V)$ if and only if $t<n/2 $. If  $A\neq \mathcal{L}_{1}$ (i.e. $t<n/2$), $N_{1}\in \mathcal{L}_{t,0}(V)-\mathcal{L}'_{t,0}(V)$ and $J_{1}\in \mathcal{U}_{s-1}(N_{1}\oplus \mathbf{1}, V)$, then $ |\mathcal{L}_{1}-A|=|\mathcal{L}_{t,0}(V)-\mathcal{L}'_{t,0}(V)|(|\mathcal{U}_{s-1}(N_{1}\oplus \mathbf{1}, V)|/ |\mathcal{U}_{s-1}(N_{1}\oplus \mathbf{1},N_{1}\oplus \mathbf{1} \oplus J_{1} )|)$.  Let $n$ be odd and $N\in \mathcal{L}_{t,0}(V)$, then $\mathbf{1}\nsubseteq soc(N)\subseteq I$. Thus, if $J_{2}\in \mathcal{U}_{s-1}(N\oplus \mathbf{1},V)$, then  $|\mathcal{L}_{2}|=|\mathcal{L}'_{t,s}(V)|=|\mathcal{L}_{t,0}(V)| (|\mathcal{U}_{s-1}(N\oplus  \mathbf{1}, V)|/|\mathcal{U}_{s-1}(N\oplus  \mathbf{1},N\oplus  \mathbf{1}\oplus J_{2})|)$.

\end{enumerate}

\end{proof}
 
In \cite{abualrub}, the authors computed all the linear cyclic codes that are reversible-complementary\footnote{named there as ``reversible  complement cyclic codes'' \cite[Definition 3]{abualrub}} of lengths $7$, $9$, $11$ and $13$. They only obtained the repetition code for $n=7,11,13$ and two codes distinct from the repetition code for $n=9$. By using Corollary \ref{cor-const-tRsF} (part $2$), one can produce $15$ distinct non-isomorphic modules containing the repetition code for $n=9$, and for each isomorphism type, the number of codes containing the repetition code can be computed using Theorem \ref{thm3.4}. Thus, the method presented here produces many more reversible-complementary codes than the one in \cite{abualrub}. On the other hand, \cite{fripertinger} introduced formulas for counting the invariant subspaces under a linear endomorphism of $\mathbb{F}_{q}^{n}$. Although this approach is more general than the one offered here for reversible codes, it is not suited for counting reversible-complementary codes.

\section*{Conclusion}

Some upper bounds for the minimum Hamming distance of reversible codes are given. In some cases, these are tighter than the Singleton bound. A method to construct reversible and reversible-complementary linear codes over $\mathbb{F}_{4}$ is presented. This method complements other known constructions, offering better results in terms of the number of codes produced. Explicit formulas for the number of the studied codes are presented, solving the problem of computing all the reversible codes containing the repetition code.

\end{document}